\def\ket#1{|#1\rangle}
\def\F{{\mathbb F}}
\def\bm#1{\mathchoice{\mbox{\boldmath{$\displaystyle #1$}}}%
{\mbox{\boldmath{$\textstyle #1$}}}%
{\mbox{\boldmath{$\scriptstyle #1$}}}%
{\mbox{\boldmath{$\scriptscriptstyle #1$}}}}
\def\trace{\mathop{\rm tr}}
\def\figstar{{\unitlength0.08em\begin{picture}(10,10)\put(0,5){\line(1,0){10}}\put(5,0){\line(0,1){10}}\end{picture}}}
\def\figbox{{\unitlength0.08em\begin{picture}(10,10)\put(0,0){\line(1,0){10}}\put(0,10){\line(1,0){10}}\put(10,0){\line(0,1){10}}\put(0,0){\line(0,1){10}}\end{picture}}}
\title{Symmetries of \penalty-1000 Codeword Stabilized Quantum
  Codes\footnote{This work was partially supported by NSERC, CIFAR,
    and IARPA.}}
\titlerunning{Symmetries of CWS Codes} 
\author[1]{Salman Beigi}
\author[2,3]{Jianxin Chen}
\author[4]{Markus Grassl}
\author[3]{Zhengfeng Ji}
\author[5]{Qiang Wang}
\author[2,3]{Bei Zeng}
\affil[1]{School of Mathematics, Institute for Research in Fundamental Sciences (IPM)\\
  Niavaran Square, Tehran, Iran\\
  \texttt{salman.beigi@gmail.com}}
\affil[2]{Department of Mathematics \& Statistics, University of Guelph\\
  50 Stone Road East, Guelph, Ontario, Canada\\
  \texttt{\{chenkenshin,zengbei\}@gmail.com}}
\affil[3]{Institute for Quantum Computing\\
  200 University Avenue West, Waterloo, Ontario, Canada\\
  \texttt{jizhengfeng@gmail.com}}
\affil[4]{Centre for Quantum Technologies, National University of Singapore\\
3 Science Drive 2, Singapore 117543\\
  \texttt{Markus.Grassl@nus.edu.sg}}
\affil[5]{School of Mathematics and Statistics, Carleton University\\
  1125 Colonel By Drive, Ottawa, Ontario, Canada\\
  \texttt{wang@math.carleton.ca}}
\authorrunning{S. Beigi, J. Chen, M. Grassl, Z. Ji, Q. Wang \& B. Zeng} 
\subjclass{E.4 Coding and Information Theory}
\keywords{CWS Codes, Union Stabilizer Codes, Permutation Symmetry, Toric Code}
\begin{document}

\maketitle

\begin{abstract}
Symmetry is at the heart of coding theory. Codes with symmetry,
especially cyclic codes, play an essential role in both theory and
practical applications of classical error-correcting codes.  Here we
examine symmetry properties for codeword stabilized (CWS) quantum
codes, which is the most general framework for constructing quantum
error-correcting codes known to date.  A CWS code $\mathcal{Q}$ can be
represented by a self-dual additive code $\mathcal{S}$ and a classical
code $\mathcal{C}$, i.\,e., $\mathcal{Q}=(\mathcal{S},\mathcal{C})$,
however this representation is in general not unique.  We show that
for any CWS code $\mathcal{Q}$ with certain permutation symmetry, one
can always find a self-dual additive code $\mathcal{S}$ with the same
permutation symmetry as $\mathcal{Q}$ such that
$\mathcal{Q}=(\mathcal{S},\mathcal{C})$.  As many good CWS codes have
been found by starting from a chosen $\mathcal{S}$, this ensures that
when trying to find CWS codes with certain permutation symmetry, the
choice of $\mathcal{S}$ with the same symmetry will suffice.  A key
step for this result is a new canonical representation for CWS codes,
which is given in terms of a unique decomposition as union stabilizer
codes.  For CWS codes, so far mainly the standard form
$(\mathcal{G},\mathcal{C})$ has been considered, where $\mathcal{G}$
is a graph state.  We analyze the symmetry of the corresponding graph
of $\mathcal{G}$, which in general cannot possess the same permutation
symmetry as $\mathcal{Q}$.  We show that it is indeed the case for the
toric code on a square lattice with translational symmetry, even if
its encoding graph can be chosen to be translational invariant.
 \end{abstract}

\section{Introduction}

Coding theory is an important component of information theory having a
long history dating back to Shannon's seminal 1948 paper that laid the
ground for information theory \cite{shannon-1948}.  Coding theory is
at the heart of reliable communication, where codes with symmetry,
especially cyclic codes, such as the Reed-Solomon codes, are among the
most widely used codes in practice \cite{MacWilliams:77}.

In recent years, it has become evident that quantum communication and
computation offer the possibility of secure and high rate information
transmission, fast computational solution of certain important
problems, and efficient physical simulation of quantum phenomena.
However, quantum information processing depends on the identification
of suitable quantum error-correcting codes (QECC) to make such
processes and machines robust against faults due to decoherence,
ubiquitous in quantum systems.  Quantum coding theory has hence been
extensively developed during the past 15 years
\cite{CRSS97,thesis:gottesman,nielsenchuang}.

Codeword stabilized (CWS) quantum codes are by far the most general
construction of QECC \cite{CSS+09}.  A CWS code $\mathcal{Q}$ can be
represented by a stabilizer state (i.\,e. a self-dual additive code)
$\mathcal{S}$ and a classical code $\mathcal{C}$,
i.\,e. $\mathcal{Q}=(\mathcal{S},\mathcal{C})$.  When $\mathcal{C}$ is
a linear code, the corresponding CWS code $\mathcal{Q}$ is actually a
stabilizer code.  Also, any CWS code is local Clifford
equivalent to a standard form $(\mathcal{G},\mathcal{C})$, where
$\mathcal{G}$ is a graph state \cite{CSS+09}.

The CWS construction encompasses stabilizer (additive) codes and all
the known non-additive codes with good parameters.  It also leads to
many new codes with good parameters, or good algebraic/combinatorial
properties, through both analytical and numerical methods.
Alternative perspectives of CWS codes have also been analyzed,
including the union stabilizer codes (USt) method
\cite{grassl-roetteler-2008a,grassl-roetteler-2008}, and the codes
based on graphs \cite{looi-code,yu-graphical-2007}. Concatenated codes
and their generalizations using CWS codes have been developed
\cite{BCG+11}, and decoding methods for CWS codes have been studied as
well \cite{LDP09}.

Given all the evidence that the CWS framework is a powerful method to
construct and analyze QECC, it remains unclear to what extent the
stabilizer state $\mathcal{S}$ and the classical code $\mathcal{C}$
can represent the symmetry of the CWS code
$\mathcal{Q}=(\mathcal{S},\mathcal{C})$ in general.  Given the vital
importance that the code symmetry plays in coding theory, this
understanding becomes crucial since if such a correspondence exists, it
can provide practical methods for constructing CWS codes with desired
symmetry from $\mathcal{S}$ and/or $\mathcal{C}$ with corresponding
symmetry.

Unfortunately, there is no immediate clue what answer one
can hope for. First of all, the representation
$\mathcal{Q}=(\mathcal{S},\mathcal{C})$ is not unique.  So for a given
CWS code $\mathcal{Q}$, there might be some stabilizer states
$\mathcal{S}$ and/or classical codes $\mathcal{C}$ which are more
symmetric than others.  Perhaps the best known example is the CWS
representation for the five-qubit code $\mathcal{Q}_5$, where in the
ideal case $\mathcal{S}$ can be chosen as a graph state corresponding
to the pentagon graph, and the $\mathcal{C}$ is chosen as the repetition
code $\{00000,11111\}$.  In this case, both $\mathcal{S}$ and
$\mathcal{C}$ nicely represent the cyclic symmetry of the five-qubit
code.

However, there are known `bad cases', too.  One example is the
seven-qubit Steane code $\mathcal{Q}_7$, where although the code
itself is cyclic, one cannot find any $\mathcal{S}$ corresponding to a
cyclic graph, even if local Clifford operations are
allowed~\cite{GKR02}.  Nonetheless, we know that the stabilizer group
for this code $\mathcal{Q}_7$ is invariant under cyclic shifts, and
the logical $Z$ operator can be chosen as $Z_L=Z^{\otimes 7}$,
therefore the logical $\ket{0}_L$ can be chosen as a cyclic stabilizer
code.  This is to say, there exists a representation for
$\mathcal{Q}_7=(\mathcal{S},\mathcal{C})$ such that $\mathcal{S}$ is
cyclic.  In general it remains unclear under which conditions a
representation for cyclic CWS code with a cyclic stabilizer state
$\mathcal{S}$ exists.

In this work, we address the symmetry properties of CWS codes.  We are
interested in the permutation symmetry of CWS codes, which includes
the important category of cyclic codes.  Our main question is, to
which extent can the representation $(\mathcal{S},\mathcal{C})$ and
the standard form $(\mathcal{G},\mathcal{C})$ reflect the symmetry
of the corresponding CWS code $\mathcal{Q}$.  We show that for any CWS
code $\mathcal{Q}$ with permutation symmetry, one can always find a
stabilizer state $\mathcal{S}$ with the same permutation symmetry as
$\mathcal{Q}$ such that $\mathcal{Q}=(\mathcal{S},\mathcal{C})$.  As
many good CWS codes are found by starting from a chosen $\mathcal{S}$,
this ensures that when trying to find CWS codes with certain
permutation symmetry, the choice of $\mathcal{S}$ with the same
symmetry will suffice.  A key step to reach this main result is to
obtain a canonical representation for CWS codes, which is in terms of
a unique decomposition as union stabilizer codes.

We know that for the standard form of CWS codes using graph states, it
is not always possible to find a graph with the same permutation
symmetry.  This is partially due to the fact that the local Clifford
operation transforming the CWS code into the standard form may break
the permutation symmetry of the original code.  Also, the graphs
usually can only represent the symmetry of the stabilizer generators
of the stabilizer state, but not the symmetry of the stabilizer state
in general.  We show that this is indeed the case for the toric code
on a two-dimensional square lattice with translational symmetry, even
if its encoding graph can be chosen to be translational invariant.

However, we show that the converse always holds, i.\,e., any graph
$\mathcal{G}$ and classical code $\mathcal{C}$ with certain
permutation symmetry yields a CWS code
$\mathcal{Q}=(\mathcal{G},\mathcal{C})$ with the same symmetry.

\section{Preliminaries}
The single-qudit (generalized) Pauli group is generated by the
operators $X$ and $Z$ acting on the qudit Hilbert space ${\mathbb
  C}^p$, satisfying $ZX=\omega XZ$, where
$\omega=\omega_p=\exp{2i\pi/p}$. For simplicity, throughout the paper,
we assume that $p$ is a prime, although our results naturally extend
to prime powers. Denote the computational basis of ${\mathbb C}^p$ by
$\{\ket{j}\colon j=0,1,\ldots,p-1\}$. Then, without loss of
generality, we can fix the operators $X$ and $Z$ such that
$X\ket{j}=\ket{j+1}$ and $Z\ket{j}=\omega^j\ket{j}$, respectively.
Let $I$ be the identity operator.
The set $\{X^aZ^b\colon a,b=0,\ldots,p-1\}$ of $p^2$ operators forms a
so-called nice unitary error basis which is a particular basis for the
vector space of $p\times p$ matrices \cite{Knill96b,Knill96a}.

The $n$-qudit Pauli group ${\mathcal P}_n$ consists of all local
operators of the form $\bm{M}=\alpha_{\bm{M}} M_1\otimes\dots\otimes
M_n$, where $\alpha_{\bm{M}}=\omega^k$ for some integer $k$ is an
overall phase factor, and $M_i=X_i^aZ_i^b$ for some
$a,b\in\{0,1,\ldots,p-1\}$, is an element of the single-qudit Pauli
group of qudit $i$.  We can write $\bm{M}$ as $\alpha_{\bm{M}} (M_1)_1
(M_2)_2\dots (M_n)_n$ or $\alpha_{\bm{M}} M_1 M_2\dots M_n$ when it is
clear what the qudit labels are.  The weight of an operator $\bm{M}$
is the number of tensor factors $M_i$ that differ from identity.

The $n$-qudit Clifford group ${\mathcal L}_n$ is the group of
$p^n\times p^n$ unitary matrices that map ${\mathcal P}_n$ to itself
under conjugation. The $n$-qudit local Clifford group is a subgroup in
${\mathcal L}_n$ containing elements of the form
$M_1\otimes\dots\otimes M_n$, where each $M_i$ is a single qudit
Clifford operation, i.\,e., $M_i\in{\mathcal L}_1$.

A stabilizer group $\mathcal{S}$ in the Pauli group ${\mathcal P}_n$ is
defined as an abelian subgroup of ${\mathcal P}_n$ which does not
contain $\omega I$.  A stabilizer consists of $p^{m}$ Pauli operators
for some $m\leq n$. As the operators in a stabilizer commute with each
other, they can be simultaneously diagonalized.  The common eigenspace
of eigenvalue $1$ is a stabilizer quantum code
$\mathcal{Q}=(\!(n,K,d)\!)_p$ with length $n$, dimension $K=p^{n-m}$,
and minimum distance $d$.  The projection $P_{\mathcal{Q}}$ onto the
code $\mathcal{Q}$ can be expressed as
\begin{equation}\label{eq:stab_code_projection}
P_{\mathcal{Q}}=\frac{1}{|\mathcal{S}|}\sum_{\bm{M}\in\mathcal{S}}\bm{M}.
\end{equation}
The centralizer $C(\mathcal{S})$ of the stabilizer $\mathcal{S}$ is
given by the elements in ${\mathcal P}_n$ which commute with all
elements in $\mathcal{S}$. For $m<n$, the minimum distance $d$ of the
code $\mathcal{Q}$ is the minimum weight of all elements in
$C(\mathcal{S})\setminus\mathcal{S}$.

If $m=n$, then there exists a unique $n$-qudit state $|\psi\rangle$
such that $\bm{M}|\psi\rangle = |\psi\rangle$ for every
$\bm{M}\in{\mathcal S}$. Such a state $|\psi\rangle$ is called a
stabilizer state, and the group ${\mathcal S}={\mathcal
  S}(|\psi\rangle)$ is called the stabilizer of $|\psi\rangle$.  A
stabilizer state can also be viewed as a self-dual code over the
finite field $\F_{p^{2}}$ under the trace inner product
\cite{Danielsen05}.  For a stabilizer state, the minimum distance is
defined as the minimum weight of the non-trivial elements in
$\mathcal{S}(|\psi\rangle)$~\cite{Danielsen05}.

A union stabilizer (USt) code of length $n$ is characterized by a
stabilizer code with stabilizer $\mathcal{S}=\langle
\bm{g}_1,\bm{g}_2,\ldots,\bm{g}_m\rangle$, where
$\bm{g}_1,\bm{g}_2,\ldots,\bm{g}_m$ are $m$ independent generators,
and a classical code $\mathcal{C}$ over $\F_p$ of length $m$.  Note
that for a given $\mathcal{S}$, the choice of the $m$ generators
$\bm{g}_j$ is not unique. Now for a classical code $\mathcal{C}$ of
length $m$ with $K$ codewords, for each codeword
$\mathbf{c}=(c_1,c_2,\ldots,c_m)\in\mathcal{C}$, the corresponding
quantum code is given by the subspace $V_{\bm c}$ stabilized by
$\omega^{c_1}\bm{g}_1$, $\omega^{c_2}\bm{g}_2$, \ldots,
$\omega^{c_m}\bm{g}_m$. Note that for ${\bm c}\neq
{\bm{c}}'\in\mathcal{C}$, the subspaces $V_{\bm c}$ and $V_{{\bm c}'}$
are mutually orthogonal. The corresponding USt code is then given by
the subspace $\bigoplus_{\bm c}V_{\bm c}$.

Therefore,
the combination of $\mathcal{S}$ (more precisely, the generators of
$\mathcal{S}$) and $\cal{C}$ gives an $(\!(n,2^{n-m}K)\!)_p$ USt quantum
code $\mathcal{Q}$.  Hence we denote a USt code $\cal{Q}$ by
$\cal{Q}=( \cal{S},\cal{C})$.  The projection onto $\mathcal{Q}$ can
be expressed as
\begin{equation}\label{eq:USt_projection}
P_{\mathcal{Q}}=\sum_{\bm{c}\in\mathcal{C}}\frac{1}{p^m}\sum_{\bm{y}\in\F_p^m}\omega^{\bm{c}\cdot\bm{y}}
\bm{g}_1^{y_1}\dots\bm{g}_m^{y_m},
\end{equation}
where we identify the elements $y_i$ of the finite field with integers modulo $p$.

A CWS code $\cal{Q}$ of length $n$ is a USt code with $m=n$. That is,
it is characterized by a stabilizer state with stabilizer
$\mathcal{S}$ and a classical code $\mathcal{C}$ of length $n$.  For a
CWS code $\cal{Q}$ given by $\cal{Q}=( \cal{S},\cal{C})$, the
stabilizer $\cal{S}$ always corresponds to a unique stabilizer
state. We will then refer to $\mathcal{S}$ as the stabilizer state
when no confusion arises.

For a CWS code, the projection $P_{\cal{Q}}$ onto the code space is
given by
\begin{equation}
\label{eq:CWSproj}
P_{\mathcal Q}
=\sum_{\bm{t}\in{\mathcal{C}}}\frac{1}{p^n}\sum_{\bm{x}\in\F_p^n} \omega^{\bm{t}\cdot\bm{x}}\bm{g}_1^{x_1}\dots\bm{g}_n^{x_n},
\end{equation}
where we again identify the elements $x_i$ of the finite field with
integers modulo $p$.

A CWS code has a permutation symmetry $\sigma$ if
\begin{equation}
P_{\mathcal Q}^{\sigma}=P_{\mathcal Q},
\end{equation}
where $P_{\mathcal Q}^{\sigma}$ is the projection onto the  space
obtained by permuting the qudits of the code $\mathcal{Q}$ according
to $\sigma$.


\section{Canonical form of CWS codes}
For a given a CWS code $\mathcal{Q}=(\mathcal{S},\mathcal{C})$, there
might exist another stabilizer state $\mathcal{S}'$ and another
classical code $\mathcal{C}'$ such that
$\mathcal{Q}=(\mathcal{S}',\mathcal{C}')$.  In other words, the
representation of a CWS code by the stabilizer state $\mathcal{S}$ and
the classical code $\mathcal{C}$ is non-unique.

In order to discuss the relationship between the symmetry of the CWS
code $\mathcal{Q}$ and that of the stabilizer state $\mathcal{S}$, we
first need to explore the relationship between the different
representations of $\mathcal{Q}$ (i.\,e., the relationship between
$\mathcal{S}$ and $\mathcal{S}'$, as well as the relationship between
$\mathcal{C}$ and $\mathcal{C}'$).

Let us start by recalling that a stabilizer code can be viewed as a
CWS code where the classical code is a linear code \cite{CSS+09}.  A
simple way to see this is that for a given stabilizer code
$\mathcal{Q}_s$ with stabilizer generated by $\mathcal{S}=\langle
\bm{g}_1,\bm{g}_2,\ldots,\bm{g}_m\rangle$, which is a code of
dimension $p^{n-m}$, we can choose the larger stabilizer
$\mathcal{S}'=\langle
\bm{g}_1,\bm{g}_2,\ldots,\bm{g}_m,\bar{Z}_1,\ldots,\bar{Z}_{n-m}\rangle$,
where $\bar{Z}_1,\ldots,\bar{Z}_{n-m}\in C(\mathcal{S})$ mutually
commute.  Now choose the classical code
$\mathcal{C}'=\bigl\{(0,\ldots,0,x_{m+1},\ldots,x_n)\colon
x_j\in\{0,\ldots,p-1\}\bigr\}$ of length $n$ with $p^{n-m}$ codewords,
where the first $m$ coordinates of each codeword are zero. Then we
have $\mathcal{Q}_s=(\mathcal{S}',\mathcal{C}')$, i.\,e., the
stabilizer code $\mathcal{Q}_s$ can then be viewed as a CWS code with
stabilizer state $\mathcal{S}'$ and classical code
$\mathcal{C}'$. However, note that the choice of $\mathcal{S}'$ (and
hence $\mathcal{C}'$) is non-unique, as in particular the choice of
$\bar{Z}_1,\ldots,\bar{Z}_{n-s}\in C(\mathcal{S})$ is non-unique.

\begin{example}
As an example, consider the five-qubit
code with stabilizer
\begin{equation}
\bm{g}_1=XZZXI,\quad
\bm{g}_2=IXZZX,\quad
\bm{g}_3=XIXZZ,\quad
\bm{g}_4=ZXIXZ.
\end{equation}
In the CWS picture, the stabilizer state can be chosen as
\begin{equation}
\mathcal{S}=\langle \bm{g}_1,\bm{g}_2,\bm{g}_3,\bm{g}_4,\bm{Z}_L\rangle,
\end{equation}
where $\bm{Z}_L=Z^{\otimes 5}$ is the logical $Z$ operator.
Alternatively, one can choose the stabilizer state
\begin{equation}
\mathcal{S}'=\langle \bm{g}_1,\bm{g}_2,\bm{g}_3,\bm{g}_4,\bm{X}_L\rangle,
\end{equation}
where $\bm{X}_L=X^{\otimes 5}$ is the logical $X$ operator.
For both $S$ and $S'$, the classical code can be chosen as
$\mathcal{C}=\{00000,00001\}$.
\end{example}

Similarly, a USt code $(\mathcal{S},\mathcal{C})$ can be viewed as a
CWS code $(\mathcal{S}',\mathcal{C}')$ with the classical code
$\mathcal{C}'$ of length $n$ possessing some coset structure, i.\,e.,
$\mathcal{C}'=\bigcup_{\bm{t}_i\in \tilde{\mathcal{C}}}
\mathcal{C}_0+\bm{t}_i$, where $\mathcal{C}_0$ is a linear code. This
linear code $\mathcal{C}_0$ of length $n$ can be readily chosen as the
classical code for the CWS representation of the stabilizer code
$\mathcal{S}$. The code $\tilde{\mathcal{C}}$ of length $n$ can be
derived from $\mathcal{C}$ of length $m$ by appending $n-m$ zero
coordinates.  However, again, the choices of $\mathcal{S}'$ and
$\mathcal{C}'$ are non-unique.


In the general situation, we have some freedom in choosing the
stabilizer state when representing a stabilizer code or a USt code in
the CWS framework.  Consequently, for a given CWS code $\mathcal{Q}$,
there are also many different ways to write it in terms of a USt code
in general.  We will show, however, that we can always obtain a unique
stabilizer $\mathcal{S}$, when expressing a given CWS code as a USt
code.  The following theorem gives a canonical form for any CWS code.

\begin{theorem}
\label{thm:unique}
Every CWS code has a unique representation as a union stabilizer code.
\end{theorem}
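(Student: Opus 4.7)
My approach is in two stages: first I would construct a canonical USt representation of any CWS code by absorbing all additive symmetry of the classical code into the stabilizer, and then I would prove uniqueness by identifying the resulting stabilizer with an intrinsic invariant of the code.

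For the construction, I would start with any USt representation $(\mathcal{S},\mathcal{C})$ of the code (the CWS representation itself, with $m=n$, always qualifies). Consider the translation stabilizer $L:=\{\bm{l}\in\F_p^m : \bm{l}+\mathcal{C}=\mathcal{C}\}$, which is automatically an $\F_p$-subspace of $\F_p^m$. Whenever $L$ is non-trivial, $\mathcal{C}$ breaks into cosets of $L$, and the subgroup $\mathcal{S}':=\{\bm{g}^{\bm{a}}:\bm{a}\in L^{\perp}\}\leq\mathcal{S}$ acts as a single scalar on each partial direct sum $\bigoplus_{\bm{c}\in \bm{c}_i+L}V_{\bm{c}}$. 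This yields a USt representation of the same code with strictly smaller stabilizer $\mathcal{S}'$ and classical code indexed by the cosets of $L$. Iterating this reduction terminates at a representation $(\mathcal{S}^{*},\mathcal{C}^{*})$ in which the translation stabilizer of $\mathcal{C}^{*}$ is trivial.

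For uniqueness, I would show that $\mathcal{S}^{*}$ is determined intrinsically by $P_{\mathcal{Q}}$. By formula~\eqref{eq:USt_projection}, the coefficient of $\bm{g}^{\bm{a}}\in\mathcal{S}$ in the Pauli expansion of $P_{\mathcal{Q}}$ equals $p^{-m}\sum_{\bm{c}\in\mathcal{C}}\omega^{\bm{c}\cdot\bm{a}}$, i.e., the value of the Fourier transform of the indicator function of $\mathcal{C}$ at $\bm{a}$. Orthogonality of characters of $\F_p^m$ shows that the non-vanishing set $A\subseteq\F_p^m$ of this transform lies in $L^{\perp}$ and $\F_p$-spans it; hence $\mathcal{S}^{*}=\{\bm{g}^{\bm{a}}:\bm{a}\in L^{\perp}\}$ coincides with the subgroup of $\mathcal{P}_n$ generated by the Pauli operators that occur with non-zero coefficient in $P_{\mathcal{Q}}$. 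Because this latter set depends only on $\mathcal{Q}$, the same $\mathcal{S}^{*}$ is produced no matter which USt representation we started from, and the classical code $\mathcal{C}^{*}$ is then pinned down by Fourier inversion of the coefficients of $P_{\mathcal{Q}}$ restricted to $\mathcal{S}^{*}$.

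I expect the main obstacle to be the careful tracking of phase conventions. The Pauli coefficients of $P_{\mathcal{Q}}$ are complex, and to describe $\mathcal{S}^{*}$ as an actual subgroup of $\mathcal{P}_n$ (rather than as a set of phase-free Pauli types) one must read off from these coefficients the specific phase each generator of $\mathcal{S}^{*}$ carries. One must also verify that two different starting USt representations---which enjoy substantial freedom in the choice of stabilizer state and its generators---reduce to the very same subgroup of $\mathcal{P}_n$, which is exactly what the intrinsic characterization above accomplishes. Once these bookkeeping issues are settled, the remainder of the proof is a direct application of abelian Fourier analysis on $\F_p^m$.
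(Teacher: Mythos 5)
Your proposal is correct and follows essentially the same route as the paper: the paper's Lemmas on translation-invariant codes and vanishing character sums are exactly your Fourier duality between the translation stabilizer $L$ of $\mathcal{C}$ and the support of the Pauli coefficients of $P_{\mathcal{Q}}$, and the paper likewise proves uniqueness by identifying the canonical stabilizer with the group generated by the Pauli operators having non-zero coefficient in $P_{\mathcal{Q}}$, then fixing the residual phase freedom by a normalization convention on the generators. Your one-step reduction via the maximal $L$ and your explicit Fourier-inversion recovery of $\mathcal{C}^{*}$ are only cosmetic streamlinings of the same argument.
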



\begin{proof}
To prove this theorem, we will need some lemmas.

\begin{lemma}[translational invariant codes]
\label{lm:coset_union}
Let $\mathcal{C}\subset \F_p^n$ be a code over $\F_p$ with
$|\mathcal{C}|=M$ and assume that for some non-zero $\bm{s}\in\F_p^n$
we have $\mathcal{C}=\mathcal{C}+\bm{s}$, i.\,e., the code is
invariant with respect to translation by $\bm{s}$.  Then $\mathcal{C}$
can be written as a disjoint union of cosets of the one-dimensional
space $\mathcal{C}_0=\langle \bm{s}\rangle$ generated by $\bm{s}$,
i.\,e.,
\[
\mathcal{C}=\bigcup_{\bm{t}_i\in\mathcal{C}'} \mathcal{C}_0+\bm{t}_i,
\]
where $\mathcal{C}'\subset\F_p^n$ with $|\mathcal{C}'|=M/p$.
\end{lemma}
\begin{proof}
By assumption, for every $\bm{x}\in\mathcal{C}$, the vector
$\bm{x}+\bm{s}$ is in the code as well.  Hence we can arrange the
elements of $\mathcal{C}$ as follows:
\[
\begin{array}{r|ccccc}
\mathcal{C}'        &\bm{t}_1       &\bm{t}_2       &\ldots &\bm{t}_{M/p}\\
\hline
\mathcal{C}'+\bm{s} &\bm{t}_1+\bm{s}&\bm{t}_2+\bm{s}&\ldots &\bm{t}_{M/p}+\bm{s}\\
\mathcal{C}'+2\bm{s} &\bm{t}_1+2\bm{s}&\bm{t}_2+2\bm{s}&\ldots&\bm{t}_{M/p}+2\bm{s}\\
\multicolumn{1}{c|}{\vdots}&\vdots&\vdots&\ddots&\vdots\\
\mathcal{C}'+(p-1)\bm{s} &\bm{t}_1+(p-1)\bm{s}&\bm{t}_2+(p-1)\bm{s}&\ldots &\bm{t}_{M/p}+(p-1)\bm{s}
\end{array}
\]
Every column in this arrangements is a coset $\mathcal{C}_0+\bm{t}_i$.
\end{proof}

\begin{lemma}[vanishing character sum]
\label{lm:shift}
Let $\mathcal{C}\subset \F_q^n$ be an arbitrary code of length $n$.
Assume that the function
\[
f\colon \F_p^n \rightarrow \mathbb{C};\quad
f(\bm{y})=\sum_{\bm{c}\in\mathcal{C}}\omega^{\bm{c}\cdot\bm{y}},
\]
where $\omega=\exp(2\pi i/p)$, vanishes outside a proper subspace
$V_0<\F_p^n$.  Then there exists a non-zero vector $\bm{s}\in\F_p^n$
such that $\mathcal{C}=\mathcal{C}+\bm{s}$.  What is more, the code
$\mathcal{C}$ can be written as a union of cosets of the linear code
$\mathcal{C}_0=V_0^\bot$, i.\,e.,
\begin{equation}\label{eq:union_cosets}
\mathcal{C}=\bigcup_{\bm{t}\in\mathcal{C}'} \mathcal{C}_0+\bm{t}.
\end{equation}
\end{lemma}

\begin{proof}
Let $\chi_C(\bm{y})$ denote the characteristic function of the code
$\mathcal{C}$, i.\,e., $\chi_{\mathcal{C}}(\bm{y})\in\{0,1\}$, and
$\chi_{\mathcal{C}}(\bm{y})=1$ if and only if
$\bm{y}\in\mathcal{C}$. Define
$g(\bm{y})=1-(1-\omega)\chi_{\mathcal{C}}(\bm{y})$. Then
$\bm{g}(y)=\omega^{\chi_{\mathcal{C}}(\bm{y})}$.

The Fourier transform of $g(\bm{y})$ over $\F_p^n$ reads
\begin{alignat*}{5}
\hat{g}(\bm{y})
&=\frac{1}{\sqrt{p^n}}\sum_{\bm{x}\in\F_p^n} \omega^{\bm{x}\cdot\bm{y}}g(\bm{x})
\\
&=\frac{1}{\sqrt{p^n}}\sum_{\bm{x}\in\F_p^n} \omega^{\bm{x}\cdot\bm{y}}(1-(1-\omega)\chi_{\mathcal{C}}(\bm{x}))
=\sqrt{p^n}\delta_{\bm{y},\bm{0}}-\frac{1-\omega}{\sqrt{p^n}}\sum_{\bm{x}\in\F_p^n} \omega^{\bm{x}\cdot\bm{y}}\chi_{\mathcal{C}}(\bm{x})\\
&=\sqrt{p^n}\delta_{\bm{y},\bm{0}}-\frac{1-\omega}{\sqrt{p^n}}\sum_{\bm{c}\in C} \omega^{\bm{c}\cdot\bm{y}}
=\sqrt{p^n}\delta_{\bm{y},\bm{0}}-\frac{1-\omega}{\sqrt{p^n}}f(\bm{y}),
\end{alignat*}
where $\delta_{\bm{y},\bm{0}}=1$ if $\bm{y}=\bm{0}$, and $\delta_{\bm{y},\bm{0}}=0$
otherwise.

This shows that for $\bm{y}\ne\bm{0}$, the Fourier transform
$\hat{g}(\bm{y})$ is proportional to the function $f(\bm{y})$, and
hence $\hat{g}$ vanishes outside of $V_0$ as well.  Recall that that
$\dim V_0 \le n-1$, as $V_0$ is a proper subspace by assumption.
Let $\bm{s}\in V_0^\bot$ be a non-zero vector that is orthogonal to all
vectors in $V_0$. Furthermore, let $V_0^{c}=\F_p^n\setminus V_0$
denote the set-complement of $V_0$ in the full vector space.

We want to show that the code $\mathcal{C}$ is invariant with respect
to translations by $\bm{s}$, i.\,e., $\mathcal{C}=\mathcal{C}+\bm{s}$
or equivalently,
$\chi_{\mathcal{C}}(\bm{y}+\bm{s})=\chi_{\mathcal{C}}(\bm{y})$.  This
is in turn equivalent to showing that $g(\bm{y})=g(\bm{y}+\bm{s})$. In
the following, $\mathcal{F}^{-1}$ denotes the inverse Fourier
transform:
\begin{alignat*}{5}
g(\bm{y}+\bm{s})=(\mathcal{F}^{-1}\hat{g})(\bm{y}+\bm{s})
&=\frac{1}{\sqrt{p^n}}\sum_{\bm{x}\in\F_p^n} \omega^{-\bm{x}\cdot(\bm{y}+\bm{s})}\hat{g}(\bm{x})\\
&=\frac{1}{\sqrt{p^n}}\sum_{\bm{x}\in V_0} \omega^{-\bm{x}\cdot(\bm{y}+\bm{s})} \hat{g}(\bm{x})
+\frac{1}{\sqrt{p^n}}\sum_{\bm{x}\in V_0^c} \omega^{-\bm{x}\cdot(\bm{y}+\bm{s})} \hat{g}(\bm{x})\\
&=\frac{1}{\sqrt{p^n}}\sum_{\bm{x}\in V_0} \omega^{-\bm{x}\cdot\bm{s}}\omega^{-\bm{x}\cdot\bm{y}} \hat{g}(\bm{x})\displaybreak[3]\\
&=\frac{1}{\sqrt{p^n}}\sum_{\bm{x}\in V_0} \omega^{-\bm{x}\cdot\bm{y}} \hat{g}(\bm{x})\displaybreak[3]\\
&=\frac{1}{\sqrt{p^n}}\sum_{\bm{x}\in V_0} \omega^{-\bm{x}\cdot\bm{y}} \hat{g}(\bm{x})
+\frac{1}{\sqrt{p^n}}\sum_{\bm{x}\in V_0^c}\omega^{-\bm{x}\cdot\bm{y}} \hat{g}(\bm{x})\displaybreak[3]\\
&=\frac{1}{\sqrt{p^n}}\sum_{\bm{x}\in \F_p^n} \omega^{-\bm{x}\cdot\bm{y}} \hat{g}(\bm{x})\\
&=({\mathcal F}^{-1}{\hat{g}})(\bm{y})=g(\bm{y}).
\end{alignat*}
Here we have used the fact that $\hat{g}(\bm{x})$ vanishes outside of $V_0$
and that $\bm{s}$ is orthogonal to all vectors in $V_0$.

From Lemma~\ref{lm:coset_union}, it follows that the code
$\mathcal{C}$ can be written as a union of cosets of the code
$\mathcal{C}_0=V_0^\bot$ generated by all vectors $\bm{s}$ that are orthogonal
to $V_0$.
\end{proof}

Now we are ready to prove Theorem~\ref{thm:unique}.  Let
$P_{\mathcal{Q}}$ denote the projection operator onto a CWS code
$\mathcal{Q}=(\!(n,K,d)\!)_p$, i.\,e.
\begin{alignat}{6}\label{eq:CWS_projection}
P_{\mathcal{Q}}
&{}={}&\sum_{\bm{t}\in{\mathcal{C}}}\frac{1}{p^n}\sum_{\bm{x}\in\F_p^n} \omega^{\bm{t}\cdot\bm{x}}\bm{g}_1^{x_1}\dots\bm{g}_n^{x_n}
&&{}={}&\frac{1}{p^n}\sum_{\bm{x}\in\F_p^n}\left(\sum_{\bm{t}\in{\mathcal{C}}}\omega^{\bm{t}\cdot\bm{x}}\right)\bm{g}_1^{x_1}\dots\bm{g}_n^{x_n}\nonumber\\
&&&&{}={}&\frac{1}{p^n}\sum_{\bm{x}\in\F_p^n}\alpha_{\bm{x}}\bm{g}_1^{x_1}\dots\bm{g}_n^{x_n}
\end{alignat}
where $\bm{g}_1,\ldots,\bm{g}_n$ are the generators of the stabilizer,
and $\mathcal{C}=(n,K)_p$ is a classical code.

First note that the coefficients $\alpha_{\bm{x}}$ in
\eqref{eq:CWS_projection} are uniquely determined since the $p^n$
operators $\{\bm{g}_1^{x_1}\dots\bm{g}_n^{x_n}\colon
\bm{x}\in\F_p^n\}$ are a subset of the error-basis of linear operators
on the space $\mathbb{C}^{p^n}$.  The coefficient $\alpha_{\bm{x}}$ is
proportional to $\trace(\bm{g}_1^{x_1}\dots\bm{g}_n^{x_n}\cdot
P_{\mathcal Q})$. On the other hand,
$\alpha_{\bm{x}}=\sum_{\bm{t}\in{\mathcal{C}}}\omega^{\bm{t}\cdot\bm{x}}=f(\bm{x})$,
where $f(\bm{x})$ is the function appearing in Lemma~\ref{lm:shift}.
So if the coefficients $\alpha_{\bm{x}}=f(\bm{x})$ vanish outside of
a proper subspace $V_0<\F_p^n$, the classical code $\mathcal{C}$ can
be decomposed as union of cosets of $\mathcal{C}_0=V_0^\bot$.  Then
(\ref{eq:CWS_projection}) can be re-written as follows:
\begin{alignat}{6}
\label{eq:CWS_projection2}
P_{\mathcal{Q}}
&{}={}\frac{1}{p^n}\sum_{\bm{x}\in V_0}\left(\sum_{\bm{t'}\in{\mathcal{C}}'}\sum_{\bm{c}\in\mathcal{C}_0}\omega^{(\bm{t}'+\bm{c})\cdot\bm{x}}\right)\bm{g}_1^{x_1}\dots\bm{g}_n^{x_n}\displaybreak[3]\nonumber\\
&{}={}\frac{1}{p^n}\sum_{\bm{x}\in V_0}\left(\sum_{\bm{c}\in\mathcal{C}_0}\omega^{\bm{c}\cdot\bm{x}}\sum_{\bm{t'}\in{\mathcal{C}}'}\omega^{\bm{t}'\cdot\bm{x}}\right)\bm{g}_1^{x_1}\dots\bm{g}_n^{x_n}\displaybreak[3]\nonumber\\
&{}={}\frac{|\mathcal{C}_0|}{p^n}\sum_{\bm{x}\in V_0}\left(\sum_{\bm{t'}\in{\mathcal{C}}'}\omega^{\bm{t}'\cdot\bm{x}}\right)\bm{g}_1^{x_1}\dots\bm{g}_n^{x_n}
\end{alignat}
In the last step we have used the fact that the spaces $V_0$ and
$\mathcal{C}_0$ are orthogonal to each other, i.\,e., the inner
product $\bm{c}\cdot\bm{x}$ vanishes.  Now assume that the space $V_0$
has dimension $m$ and that $\{\bm{b}_1,\ldots,\bm{b}_m\}\subset\F_p^n$
is a basis of $V_0$.  Then every vector $\bm{x}\in V_0$ can be
expressed as $\bm{x}=\sum_{j=1}^m y_j\bm{b}_j$. For every
$\bm{t}'\in\mathcal{C}'$ we define the vectors $\bm{s}\in\F_p^m$ with
$s_j=\sum_{i=1}^n t_i b_{ji}$, forming another classical code
$\mathcal{D}\subset\F_p^m$.  Further, we define the $m$ operators
$\tilde{\bm{g}}_j=\prod_{i=1}^n\bm{g}_i^{b_{ji}}$.  This allows us to
express \eqref{eq:CWS_projection2} as
\begin{eqnarray}
\label{eq:CWS_projection3}
P_{\mathcal{Q}}
&=&\frac{1}{p^m}\sum_{\bm{y}\in\F_p^m}
  \left(\sum_{\bm{s}\in{\mathcal{D}}}\omega^{\bm{s}\cdot\bm{y}}\right)\tilde{\bm{g}}_1^{y_1}\dots\tilde{\bm{g}}_m^{y_m}.
\end{eqnarray}
Hence, whenever the classical code associated to a CWS code has some
non-trivial shift invariance, the projection onto a CWS code can be
expressed as a projection onto a USt code
(cf. \eqref{eq:USt_projection}), thereby increasing the dimension of
the underlying stabilizer code and reducing the size of the classical
code. In order to obtain a unique representation, we may assume that
the stabilizer code is of maximal dimension, and hence the classical
code is ``without any linear structure.''

In order to show uniqueness, consider the coefficients
$\trace(\bm{M}\cdot P_{\mathcal{Q}})$ of the expansion of the
projection $P_{\mathcal{Q}}$ in terms of the operator basis formed by
the $n$-qudit Pauli matrices $\bm{M}$.  Clearly, we have
$\{\bm{M}\colon \trace(\bm{M}\cdot P_{\mathcal{Q}})\ne 0\}\subset
\mathcal{S}=\langle \tilde{\bm{g}}_1,\ldots,\tilde{\bm{g}}_m\rangle$.
If the group $\mathcal{S}'=\langle\bm{M}\colon \trace(\bm{M}\cdot
P_{\mathcal{Q}})\ne 0\rangle$ was a proper subgroup of $\mathcal{S}$,
the coefficients
$\sum_{\bm{s}\in\mathcal{D}}\omega^{\bm{s}\cdot\bm{y}}$ would vanish
for $\bm{y}$ outside a proper subspace $V_0<\F_p^m$, contradicting
the assumption the classical code $\mathcal{D}$ has no linear
structure.  

Note that the stabilizer $\mathcal{S}$ is only unique up to the choice
of some phase factors of the error basis.  For example, replacing
$\tilde{\bm{g}}_1$ by $\omega{\tilde{\bm{g}}}_1$ will introduce some
phase factor which has to be compensated by changing the first
coordinate $s_1$ of the codewords $\bm{s}$ of the classical code
$\mathcal{D}$. To finally fix the degree of freedom, we can enforce
${\bm g}_i= M_1\otimes\cdots\otimes M_n$, with $M_j=X_j^{a}Z_j^{b}$
for $j=1,2,\ldots,n$ and $a,b\in\{0,1,\ldots p-1\}$.
\end{proof}

\section{Symmetries of the stabilizer state of a CWS code}

We are now ready to discuss the relationship between the symmetries of
the CWS code $\mathcal{Q}$ and that of the corresponding stabilizer state
$\mathcal{S}$.

\begin{theorem}
\label{thm:CWSsym}
For any CWS code $\mathcal{Q}$ with permutation symmetry $\sigma$, there exists a
stabilizer state $\mathcal{S}$ with the same permutation symmetry $\sigma$ such that
$\mathcal{Q}=(\mathcal{S}, \mathcal{C})$.
\end{theorem}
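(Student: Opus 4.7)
My plan is to leverage the uniqueness of the canonical USt decomposition given by Theorem~\ref{thm:unique}.  I would first apply that theorem to write $\mathcal{Q}$ in its unique USt form $(\mathcal{S}_0,\mathcal{D}_0)$, where $\mathcal{S}_0=\langle\bm{g}_1,\ldots,\bm{g}_m\rangle$ has size $p^m$ with $m\le n$ and $\mathcal{D}_0\subset\F_p^m$ has no non-trivial shift-invariance.  Since $P_{\mathcal{Q}}^{\sigma}=P_{\mathcal{Q}}$ and the canonical USt representation is uniquely determined by $P_{\mathcal{Q}}$ (together with the phase normalisation of generators fixed at the end of the proof of Theorem~\ref{thm:unique}), the canonical form of $\sigma(\mathcal{Q})$ must coincide with that of $\mathcal{Q}$, which forces $\sigma(\mathcal{S}_0)=\mathcal{S}_0$ as a subgroup of $\mathcal{P}_n$.

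Next, I would reduce the problem to extending $\mathcal{S}_0$ to a stabilizer state in a $\sigma$-equivariant way.  Every CWS representation of $\mathcal{Q}$ arises by adjoining to $\mathcal{S}_0$ a family of $n-m$ pairwise commuting independent Pauli operators $\bm{h}_1,\ldots,\bm{h}_{n-m}\in C(\mathcal{S}_0)$, exactly as in the passage between \eqref{eq:USt_projection} and \eqref{eq:CWSproj}, with the classical code $\mathcal{C}\subset\F_p^n$ obtained from $\mathcal{D}_0$ by appending zero coordinates and then taking cosets.  The resulting stabilizer state $\mathcal{S}=\langle\mathcal{S}_0,\bm{h}_1,\ldots,\bm{h}_{n-m}\rangle$ is $\sigma$-invariant as a group iff its image in the symplectic $\F_p$-space $L=C(\mathcal{S}_0)/\mathcal{S}_0$ is a $\sigma$-invariant Lagrangian; here the symplectic form is induced by Pauli commutation and $\sigma$ acts symplectically because it stabilises both $\mathcal{S}_0$ and $C(\mathcal{S}_0)$.

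The decisive and hardest step is therefore to produce a $\sigma$-invariant Lagrangian in $L$.  For a generic finite-order symplectic automorphism no invariant Lagrangian need exist (for instance $\mathrm{SL}_2(\F_2)$ cyclically permutes the three Lagrangians of $\F_2^2$), so the argument must genuinely use that $\sigma$ arises from a qudit permutation and that $\mathcal{S}_0$ is of the canonical form.  My planned strategy is an inductive construction: locate a non-zero $\sigma$-fixed isotropic vector $w\in L$, pass to the symplectic quotient $w^{\perp}/\langle w\rangle$ on which $\sigma$ still acts symplectically in strictly smaller dimension, and recurse.  At each step the existence of $w$ should be extracted from the orbit decomposition of $\sigma$ on the $n$ qudits---typically via orbit sums of generating Paulis, projected modulo $\mathcal{S}_0$---and this is the point at which I expect the most work, because the naive symplectic-representation argument breaks down when $p$ divides the order of $\sigma$.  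Once an invariant Lagrangian $M\subset L$ is obtained, lifting a basis to $\bm{h}_1,\ldots,\bm{h}_{n-m}\in C(\mathcal{S}_0)$, defining $\mathcal{S}$ accordingly, and performing the coset augmentation of $\mathcal{D}_0$ produces the desired $\mathcal{Q}=(\mathcal{S},\mathcal{C})$ with $\sigma(\mathcal{S})=\mathcal{S}$.
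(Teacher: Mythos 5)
Your first half coincides with the paper's: you invoke Theorem~\ref{thm:unique}, use the uniqueness of the canonical USt representation to conclude that $\sigma$ preserves the stabilizer $\mathcal{S}_0$ of that representation (this is precisely the paper's Lemma~\ref{lm:proj}), and correctly reformulate what remains as the construction of a $\sigma$-invariant Lagrangian in the symplectic space $L=C(\mathcal{S}_0)/\mathcal{S}_0$. The problem is that this remaining step is the entire content of the theorem, and you do not supply it. Your induction needs, at every stage, a non-zero $\sigma$-fixed isotropic vector of $L$, and you give no argument that one exists; you only promise that it ``should be extracted from the orbit decomposition.'' Worse, your own example already shows the danger is not confined to the case $p\mid\mathrm{ord}(\sigma)$: an order-$3$ element of $\mathrm{SL}_2(\F_2)$ is semisimple, has no non-zero fixed vector, and permutes all three Lagrangians of $\F_2^2$, so in the semisimple regime the induction need not even start. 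Without an argument that genuinely exploits the fact that $\sigma$ is a qudit permutation, this is a genuine gap, not a routine verification.

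The paper closes exactly this gap with a much more elementary observation that your plan misses: a permutation of tensor factors preserves the \emph{type} of a Pauli operator. In the CSS case (Lemma~\ref{lm:CSSsym}) and, via the standard form of the stabilizer, in the general stabilizer case (Lemma~\ref{lm:Stabilizersym}), one may choose the logical operators $\mathcal{L}_Z$ to be purely $Z$-type; equivalently, the image in $L$ of the subgroup of $Z$-type elements of $C(\mathcal{S}_0)$ is a Lagrangian that is \emph{canonically} singled out by the $X$/$Z$ grading of the Pauli group, and is therefore automatically $\sigma$-invariant for any qudit permutation $\sigma$ fixing $\mathcal{S}_0$. Adjoining these operators to $\mathcal{S}_0$ yields $n$ independent commuting generators, hence a unique $\sigma$-invariant stabilizer state $\ket{\psi}$, after which the passage from the USt classical code to the CWS classical code via cosets goes through as you describe. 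So the key idea you are searching for is not a symplectic fixed-point or averaging argument but the existence of a distinguished, permutation-equivariant Lagrangian; to repair your proof you should replace the inductive construction by this one observation.
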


\begin{proof} To prove this theorem, we will need some lemmas.
\begin{lemma}
\label{lm:proj}
If the projection operator $P_{\mathcal C}$ given in
Eq.~\eqref{eq:CWS_projection} is invariant under a permutation
$\sigma$ of the qudits, then the stabilizer code related to expressing
$P_{\mathcal{C}}$ in terms of a USt code as in
Eq.~\eqref{eq:CWS_projection3} is invariant with respect to the
permutation as well.
\end{lemma}
\begin{proof}
The statement follows directly from the uniqueness of the stabilizer
group
$\mathcal{S}=\langle\tilde{\bm{g}}_1,\ldots,\tilde{\bm{g}}_m\rangle$
generated by the operators in Eq.~\eqref{eq:CWS_projection3}.
\end{proof}

We now prove a lemma for a special case of Theorem~\ref{thm:CWSsym}, when
the CWS code is a Calderbank–Shor–Steane (CSS) code \cite{CS96,steane96}.

\begin{lemma}
\label{lm:CSSsym}
For a CSS code $\mathcal{Q}$ with permutation symmetry $\sigma$, there
exists a stabilizer state $\ket{\psi}\in\mathcal{Q}$ such that
$\ket{\psi}$ has the same permutation symmetry as $\mathcal{Q}$.
\end{lemma}

\begin{proof}
For a CSS code $\mathcal{Q}$, the stabilizer generators can always be
chosen such that every generator is either a tensor product of powers
of $X$ (denoted by $\mathcal{S}_X$) or a tensor product of powers of
$Z$ (denoted by $\mathcal{S}_Z$). We can use the following matrix
form:
\[ 
\left[
\begin{array}{c|c}
\mathcal{S}_X & 0 \\
0 & \mathcal{S}_Z \\
\end{array}
\right]
\] 
As the permutation symmetry $\sigma$ of $\mathcal{Q}$ does not change
the type of an operator, both $\mathcal{S}_X$ and $\mathcal{S}_Z$ have
necessarily the same symmetry $\sigma$. Furthermore, the logical
operators can also be chosen as either tensor products of powers of
$X$ or tensor products of powers of $Z$, which correspond to the dual
of the classical codes associated to either the $Z$ stabilizers or the
$X$ stabilizers, respectively.  Without loss of generality let us
choose a set $\mathcal{L}_Z$ of commuting logical operators which are
all of $Z$ type.  Then group generated by the set
$\mathcal{S}_X\cup\mathcal{S}_Z\cup\mathcal{L}_Z$ of mutually
commuting operators is again invariant under the permutation $\sigma$.
As the stabilizer group is maximal, it stabilizes a unique state
$\ket{\psi}$.  Hence $\ket{\psi}$ is the stabilizer state with the
desired symmetry, and the CSS code can be expressed as CWS code in
terms of $\ket{\psi}$ and some classical code $\mathcal{C}$.
\end{proof}

We now prove a lemma for the stabilizer code case of Theorem~\ref{thm:CWSsym}, 
which improves the result of Lemma~\ref{lm:CSSsym}.

\begin{lemma}
\label{lm:Stabilizersym}
For a stabilizer code $\mathcal{Q}$ with permutation symmetry $\sigma$,
there exists a stabilizer state
$\ket{\psi}\in\mathcal{Q}$ such that $\ket{\psi}$ has
the same permutation symmetry as $\mathcal{Q}$.
\end{lemma}

\begin{proof}
To prove this lemma, we shall use a standard form for
stabilizers (see \cite[Section 10.5.7]{nielsenchuang}):
\[ 
\left[
\begin{array}{ccc|ccc}
I & A_1 & A_2 & B & 0 & C \\
0 & 0 & 0 & D & I & E \\
\end{array}
\right]
=\left[
\begin{array}{c|c}
\mathcal{S}_X & \mathcal{S}_Z \\
0 & \mathcal{S}'_Z \\
\end{array}
\right]
=\left[
\begin{array}{c}
\mathcal{S} \\
\mathcal{S}' \\
\end{array}
\right]
\] 
where $A_1$ is an $r\times(n-k-r)$ matrix, $A_2$ is an $r\times k$
matrix, $B$ is an $r\times r$ matrix, $C$ is an $r\times k$ matrix,
$D$ is an $(n-k-r)\times r$ matrix, and $E$ is an $(n-k-r)\times k$
matrix.  Similar as in the CSS case, we can choose a set
$\mathcal{L}_Z$ of commuting logical operators which are all of $Z$
type. In matrix form, they are given by $[0\;0\;0|-A_2^t\;0\;I]$.
Then the group generated by the mutually commuting operators in
$\mathcal{S}\cup\mathcal{S}'\cup\mathcal{L}_X$ stabilizes a unique
state $\ket{\psi}$ which is invariant with respect to the permutation
$\sigma$.  Hence $\ket{\psi}$ is the stabilizer state with the desired
symmetry that can be used to express $\mathcal{Q}$ as CWS code with
some classical code $\mathcal{C}$.
\end{proof}
To prove Theorem~\ref{thm:CWSsym}, given a CWS code $\mathcal{Q}$, we
first find its unique decomposition as a USt code
$\mathcal{Q}=(\mathcal{S},\mathcal{C})$, based on
Theorem~\ref{thm:unique}. Here $\mathcal{S}$ is in general a stabilizer
code with $m=n-k$ generators. If $\mathcal{Q}$ has a permutation
symmetry $\sigma$, then according to Lemma~\ref{lm:proj}, the
stabilizer code $\mathcal{S}$ must also have the symmetry $\sigma$.
Now according to Lemma~\ref{lm:Stabilizersym}, there exists a quantum
state $\ket{\psi}$ in the stabilizer code $\mathcal{S}$ which also has
the symmetry $\sigma$.  Hence $\ket{\psi}$ is the stabilizer state
with the desired symmetry. Note that the stabilizer $\mathcal{S}'$ of
the state $\ket{\psi}$ contains the original stabilizer
$\mathcal{S}$. Therefore, common eigenspaces of $\mathcal{S}$ are
further decomposed into one-dimensional joint eigenspaces of
$\mathcal{S}'$, and we can rewrite the projection $P_{\mathcal{Q}}$
onto the USt code in the form corresponding to a CWS
code.
\end{proof}

\section{Symmetries of the Classical Code}

Theorem~\ref{thm:CWSsym} does not make any statement about the
symmetry of the classical code.  In general, if we insist to use the
canonical form of the CWS code as given by Theorem~\ref{thm:unique}, we
cannot expect that the (non-linear) classical code $\mathcal{C}$
associated with the CWS code $\mathcal{Q}=(\mathcal{S},\mathcal{C})$
has the same symmetry as $\mathcal{Q}$.  That is, in this case, even
if the stabilizer $\mathcal{S}$ has the same permutation symmetry
$\sigma$ as the quantum code $\mathcal{Q}$, one might not be able to
find a classical code $\mathcal{C}$ with the same symmetry $\sigma$ in
general. Let us look at an example.

\begin{example}
Consider the stabilizer state $1/\sqrt{2}(\ket{00\ldots
  0}-\ket{11\ldots 1})$ (hence a CWS code, denoted by $\mathcal{Q}$),
which is invariant under all permutations.  Using the canonical form
of $\mathcal{Q}=(\mathcal{S},\mathcal{C})$ as given by
Theorem~\ref{thm:unique}, the group $\mathcal{S}$ is generated by
$XX\dots X$ and all pairs of $Z$, which is permutation
invariant.  However, the classical code $\mathcal{C}$ consists of the
vector which is one in the first coordinate and zero elsewhere, i.\,e.,
$\mathcal{C}$ is a code with a single codeword $10\ldots0$, which has
a smaller symmetry group than that of $\mathcal{Q}$.

On the other hand, if we choose the group $\mathcal{S}'$ generated
by $-XX\dots X$ and all pairs of $Z$, the corresponding classical code
$\mathcal{C}'$ consists just of the zero vector.  So in the
representation $\mathcal{Q}=(\mathcal{S}',\mathcal{C}')$,  both
$\mathcal{S}'$ and $\mathcal{C}'$ have the same permutation symmetries
as $\mathcal{Q}$.
\end{example}

This example indicates that exploiting the phase factor freedom in the
USt code decomposition of a CWS code, and thereby deviating slightly
from the canonical form, there is some chance to find both a
stabilizer and a classical code with the same permutation symmetry as
the CWS code.

To study the properties of the classical code $\mathcal{C}$ associated
with a CWS code $\mathcal{Q}=(\mathcal{S},\mathcal{C})$, consider the
case where the stabilizer state $\mathcal{S}$ has some permutation
symmetry $\sigma$.  Then for given generators
$\{\bm{g}_1,\bm{g}_2,\ldots,\bm{g}_n\}$ of the stabilizer
$\mathcal{S}$, the permuted operators
$\{\bm{g}_1^\sigma,\bm{g}_2^\sigma,\ldots,\bm{g}_n^\sigma\}$ generate
the same stabilizer $\mathcal{S}$.  The transformation
$\bm{g}_i\mapsto \bm{g}_i^\sigma$ can be characterized by a
$\mathbb{Z}_p$-valued, invertible $n\times n$ matrix $R$ given by
\begin{equation}
\label{eq:gentransR}
\bm{g}_i^\sigma=\prod\limits_{j=1}^n \bm{g}_j^{{R}_{ji}}.
\end{equation}

Let us write the $K$ classical codewords in $\mathcal{C}$ as an
$K\times n$ matrix with entries $c_{ij}$.  We are now ready to present
the following theorem, which gives a sufficient condition for
$\mathcal{C}$ to guarantee that $\mathcal{Q}$ has the same permutation
symmetry as $\mathcal{S}$
\begin{theorem}
\label{thm:suff}
Let $\mathcal{Q}=(\mathcal{S},\mathcal{C})$ be a CWS code, and let
$\{\bm{g}_1,\bm{g}_2,\ldots,\bm{g}_n\}$ be generators of
$\mathcal{S}$. If $\mathcal{S}$ has permutation symmetry $\sigma$,
where $\bm{g}_i^\sigma=\prod\limits_{j=1}^n \bm{g}_j^{{R}_{ji}}$, and
$\mathcal{C}{R}\cong\mathcal{C}$, then $\mathcal{Q}$ has the same
permutation symmetry $\sigma$ as $\mathcal{S}$. Here by
$\mathcal{C}R\cong\mathcal{C}$ we mean that the set of rows of
$\mathcal{C}R$, corresponding to the transformed code, equals the code
$\mathcal{C}$ (not as a matrix).
\end{theorem}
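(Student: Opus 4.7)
The plan is to verify $P_{\mathcal{Q}}^\sigma = P_{\mathcal{Q}}$ directly from the CWS projection formula \eqref{eq:CWSproj} by a change of summation variables governed by the matrix $R$. Since $\sigma$ acts by conjugation by the permutation unitary and is therefore a group automorphism of $\mathcal{P}_n$, we have $(\bm{g}_1^{x_1}\cdots\bm{g}_n^{x_n})^\sigma = (\bm{g}_1^\sigma)^{x_1}\cdots(\bm{g}_n^\sigma)^{x_n}$. Expanding each $\bm{g}_i^\sigma$ via \eqref{eq:gentransR} and using commutativity of the stabilizer to reorder and collect factors yields
\[
(\bm{g}_1^\sigma)^{x_1}\cdots(\bm{g}_n^\sigma)^{x_n} = \prod_{j=1}^n \bm{g}_j^{(R\bm{x})_j}.
\]

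Next, I substitute $\bm{y}=R\bm{x}$, which is a bijection on $\F_p^n$ because $R$ is invertible. Using the identity $\bm{t}\cdot R^{-1}\bm{y} = (\bm{t}R^{-1})\cdot\bm{y}$, the Fourier coefficient $\sum_{\bm{t}\in\mathcal{C}}\omega^{\bm{t}\cdot\bm{x}}$ appearing in the original projection formula becomes $\sum_{\bm{t}'\in\mathcal{C}R^{-1}}\omega^{\bm{t}'\cdot\bm{y}}$ after the substitution. The hypothesis $\mathcal{C}R\cong\mathcal{C}$ means $\mathcal{C}R = \mathcal{C}$ as sets, and this immediately gives $\mathcal{C}R^{-1} = \mathcal{C}$ as well, since right multiplication by $R$ is an invertible map of $\mathcal{C}$ to itself. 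Hence the transformed coefficient equals the original one for every $\bm{y}$, and comparison with \eqref{eq:CWSproj} gives $P_{\mathcal{Q}}^\sigma = P_{\mathcal{Q}}$, as required.

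The only subtlety is verifying that no unintended phase factors arise when expanding $(\bm{g}_i^\sigma)^{x_i}$ into a product of powers of the $\bm{g}_j$; this follows immediately from the commutativity of the stabilizer, but should be noted explicitly since raising products of non-commuting Pauli operators to integer powers generally introduces $\omega$-phases. Beyond this bookkeeping, the argument is a routine change of variables, with the substantive input being the invariance $\mathcal{C}R=\mathcal{C}$ of the classical code, so I do not anticipate any serious difficulty in executing the plan.
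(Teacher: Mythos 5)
Your proposal is correct and follows essentially the same route as the paper: both apply $\sigma$ to the projection formula, perform the change of variables $\bm{x}\mapsto R\bm{x}$ together with the dual transformation $\bm{t}=\bm{t}'R$ on the classical codewords, and invoke the set-equality $\mathcal{C}R=\mathcal{C}$ (equivalently $\mathcal{C}R^{-1}=\mathcal{C}$) to conclude $P_{\mathcal{Q}}^\sigma=P_{\mathcal{Q}}$. Your remark that commutativity of the abelian stabilizer is what prevents stray $\omega$-phases when collecting the powers of the $\bm{g}_j$ is the correct justification and is implicit in the paper's computation.
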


\begin{proof}
We start by applying the permutation $\sigma$ to the projection
$P_{\mathcal{Q}}$  onto the code space
given by Eq.~\eqref{eq:CWSproj}:
\begin{alignat}{7}
\label{eq:CWStrans}
P_{\mathcal Q}^\sigma
&{}=\sum_{\bm{t}\in{\mathcal{C}}}\frac{1}{p^n}\sum_{\bm{x}\in\F_p^n} \omega^{\bm{t}\cdot\bm{x}}(\bm{g}_1^\sigma)^{x_1}\dots(\bm{g}_n^\sigma)^{x_n}
=\sum_{\bm{t}\in{\mathcal{C}}}\frac{1}{p^n}\sum_{\bm{x}\in\F_p^n} \omega^{\bm{t}\cdot\bm{x}}(\prod_{j}\bm{g}_j^{R_{j1}})^{x_1}
\dots(\prod_j\bm{g}_j^{R_{jn}})^{x_n}\nonumber\\
&{}=\sum_{\bm{t}\in{\mathcal{C}}}\frac{1}{p^n}\sum_{\bm{x}\in\F_p^n} \omega^{\bm{t}\cdot\bm{x}}\bm{g}_1^{\sum_jR_{1j}x_j}
\dots\bm{g}_n^{\sum_jR_{nj}x_j}
\end{alignat}
Let $x_j'=\sum_iR_{ji}x_i$ and $t_i=\sum_jR_{ji}t'_j$.  Then for
$\bm{t}\in\mathcal{C}$, we have $\bm{t}'\in\mathcal{C}'$, where the
transformed code $\mathcal{C}'$, considered as a $K\times n$ matrix,
is given by
\begin{equation}
\label{eq:ctrans}
\mathcal{C}=\mathcal{C}'R.
\end{equation}
Then Eq.~\eqref{eq:CWStrans} becomes
\begin{eqnarray}
\label{eq:CWStransC}
P_{\mathcal Q}^\sigma
&=&\sum_{\bm{t}'\in{\mathcal{C}'}}\frac{1}{p^n}\sum_{\bm{x}\in\F_p^n} \omega^{\sum_i\sum_jR_{ji}t'_jx_i}{\bm{g}}_1^{\sum_jR_{1j}x_j}
\dots{\bm{g}}_n^{\sum_jR_{nj}x_j}\nonumber\\
&=&\sum_{\bm{t}'\in\mathcal{C}'}\frac{1}{p^n}\sum_{\bm{x}\in\F_p^n} \omega^{\sum_jt'_j(\sum_iR_{ji}x_i)}{\bm{g}}_1^{\sum_jR_{1j}x_j}
\dots{\bm{g}}_n^{\sum_jR_{nj}x_j}\nonumber\\
&=&\sum_{\bm{t}'\in\mathcal{C}'}\frac{1}{p^n}\sum_{\bm{x}'\in\F_p^n} \omega^{\bm{t}'\cdot\bm{x}'}{\bm{g}}_1^{x_1'}
\dots{\bm{g}}_n^{x_n'}.
\end{eqnarray}
Now because of $\mathcal{C}R\cong\mathcal{C}$, the rows of
$\mathcal{C}R$ are a permutation of the rows of
$\mathcal{C}$.  Hence there exists a permutation matrix $P$ such that
$P\mathcal{C}R=\mathcal{C}$, which gives
\begin{equation}
P\mathcal{C}=\mathcal{C}R^{-1}=\mathcal{C}'.
\end{equation}
The second equality follows from Eq.~\eqref{eq:ctrans}. Hence the rows
of $\mathcal{C}'$ are a permutation of the rows of $\mathcal{C}$,
i.\,e., $\mathcal{C}$ and $\mathcal{C}'$ are the same code. Therefore
Eq.~\eqref{eq:CWStransC} becomes
\begin{eqnarray}
\label{eq:CWStransC1}
P_{\mathcal Q}^\sigma
=\sum_{\bm{t}'\in\mathcal{C}}\frac{1}{p^n}\sum_{\bm{x}'\in\F_p^n} \omega^{\bm{t}'\cdot\bm{x}'}{\bm{g}}_1^{x_1'}
\dots{\bm{g}}_n^{x_n'}= P_{\mathcal Q},
\end{eqnarray}
which proves the theorem.
\end{proof}

Note that although Theorem~\ref{thm:suff} is stated in terms of a set of
generator $\bm{g}_i$ of $\mathcal{S}$, it is actually independent of
the choice of the generators. That is to say, if
$\bm{g}_i^\sigma=\prod\limits_{j=1}^n \bm{g}_j^{R_{ji}}$, and
$\mathcal{C}R\cong\mathcal{C}$ holds, then for some other generators
$\bm{g}_i'$ of $\mathcal{S}$, where
$\mathcal{Q}=(\mathcal{S}_{\bm{g}'},\mathcal{C}')$ and
$(\bm{g}'_i)^\sigma=\prod\limits_{j=1}^n (\bm{g}'_j)^{R'_{ji}}$, one
would then have $\mathcal{C}'R'\cong\mathcal{C}'$.

Theorem~\ref{thm:suff} gives a sufficient condition that the CWS code
$\mathcal{Q}=(\mathcal{S},\mathcal{C})$ may have the same permutation
symmetry as $\mathcal{S}$. Note that \cite[Proposition 5.2]{DK10}
considers a special case of Proposition~\ref{thm:suff}, where the
permutation $\sigma$ is the cyclic shift.  However, it turns out that
the argument in \cite{DK10} is false; the cyclic symmetry of the
stabilizer $\mathcal{S}$ is not sufficient to guarantee the cyclic
symmetry of the resulting quantum code $\mathcal{Q}$; the classical
code $\mathcal{C}$ must also have a cyclic symmetry, as discussed in
Corollary~\ref{pro:cyclicgraph}.

It remains unclear whether the condition given in
Theorem~\ref{thm:suff} is also necessary, at least in the case when
both the CWS code $\mathcal{Q}$ and the stabilizer $\mathcal{S}$ have
a permutation symmetry $\sigma$.  We expect that in this case the
condition $\mathcal{C}R\cong\mathcal{C}$ would be necessary. However,
while the condition might be violated for a particular choice of
$\mathcal{S}$, it might hold for a different representation
$\mathcal{Q}=(\mathcal{S}',\mathcal{C}')$.

\section{The Standard Form}

Starting with the unique representation of a CWS code as a USt code,
we can derive a standard form of a CWS code.  We know that up to local
Clifford (LC) operations, any CWS code $\cal{Q}$ can be represented by
a graph $\cal{G}$ and a binary classical code $\cal{C}$
\cite{CCS+09,CSS+09}.  Starting with a given CWS code
$\mathcal{Q}=(\mathcal{S},\mathcal{C})$, one can transform the
stabilizer $\cal{S}$ into a graph state using LC operations, and then
$\mathcal{C}$ will be transformed accordingly \cite{CCS+09}.  Our
concern is that if $\cal{Q}$ has some permutation symmetry $\sigma$,
whether it can be kept during this LC operations, in other words,
whether one can always obtain a graph with the same permutation
symmetry $\sigma$ as $\cal{Q}$ has.

Indeed, even if one can always find a stabilizer state $\mathcal{S}$
with the same symmetry as $\cal{Q}$ has, we are asking too much here
for the graph $\cal{G}$.  In general, one cannot find a graph with the same 
permutation symmetry as $\cal{Q}$ has. Let us look at an example.
\begin{example}
The stabilizer $\mathcal{S}$ for the $7$-qubit Steane code is generated by
\begin{equation}
{\bm g}_1=XIXXXII,\quad {\bm g}_2=IXIXXXI,\quad {\bm g}_3=IIXIXXX,
\end{equation}
which are the three $X$-type generators, and the three $Z$-type
generators
\begin{equation}
{\bm g}_4=ZIZZZII,\quad {\bm g}_5=IZIZZZI,\quad {\bm g}_6=IIZIZZZ.
\end{equation}
This code is cyclic, and for its CWS representation, one can choose,
e.\,g., the stabilizer state $\ket{\psi}$ with stabilizer
$\mathcal{S}'$ generated by
$\mathcal{S}'=\langle\bm{g}_1,\bm{g}_2,\bm{g}_3,\bm{g}_4,\bm{g}_5,\bm{g}_6,Z^{\otimes
  7}\rangle$.  Then $\ket{\psi}$ is cyclic as well.  However, when
transforming the Steane code into the standard form of its CWS
representation, one cannot find it a cyclic graph \cite{GKR02}. In
fact, the best symmetric graph one can find is with a three-fold
cyclic symmetry instead of a $7$-fold cyclic symmetry, as shown in
Fig.~\ref{fig1}(a).  The three-fold symmetry is in fact the symmetry
of the generators of $\mathcal{S}'$ instead of the symmetry of the
entire stabilizer group $\mathcal{S}'$.  This is related to the fact
that the graph $\cal{G}$ in some sense represents only the stabilizer
generators of its corresponding graph state.
\end{example}

\begin{figure}[htbp]
\vskip-1ex
\centerline{\includegraphics[width=2.4in,angle=0]{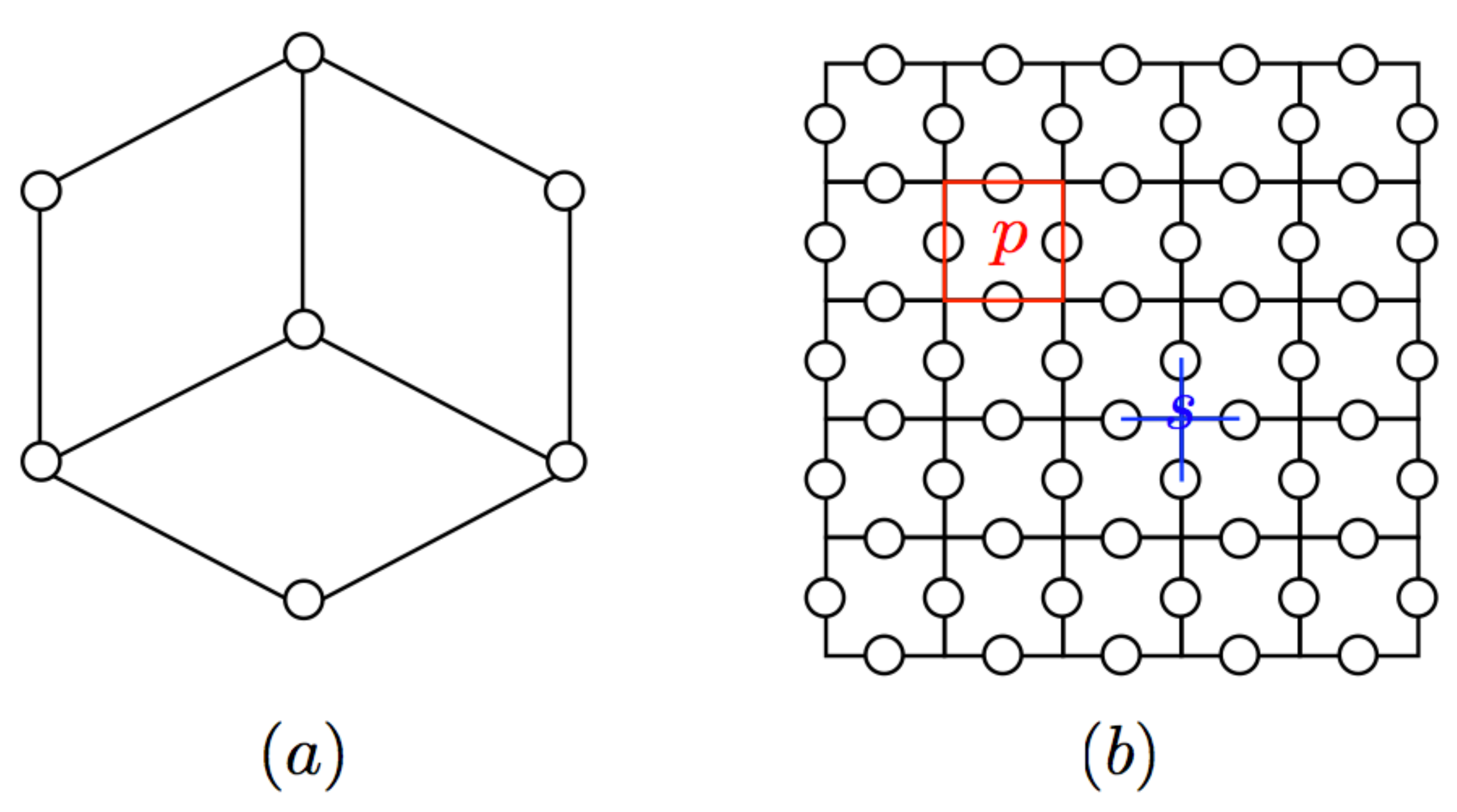}}
\vskip-1ex
\caption{(a) The graph for the Steane code with three-fold cyclic
  symmetry. (b) The toric code on a square lattice. Qubits are sitting
  on edges of the lattice. $p$ denotes a plaquette, which contains $4$
  qubits as shown across the red lines. $s$ denotes a star, which
  contains $4$ qubits as shown across the blue lines.\label{fig1} }
\end{figure}

The toric code turns out to provide another example, as shown in
Fig.~\ref{fig1}(b), which is in some sense even worse than the
Steane code example. Despite the fact that the generators of the
stabilizer group for the toric code have a translational symmetry, we
will show in Theorem~\ref{thm:toric} that one cannot find a graph with
translational symmetry. However, both the Steane code and the toric
code do not provide counterexamples to Theorem~\ref{thm:CWSsym}, as
the logical zero has the desired symmetry in both cases.

Nevertheless, there might still be some interesting relationship
between the permutation symmetries of $\cal{Q}$ and the symmetries of
$\cal{G}$ and $\cal{C}$. Let us start with a simple case:
\begin{corollary}
\label{pro:cyclicgraph}
For a CWS code $\cal{Q}=(\cal{G},\cal{C})$, if both $\cal{G}$ and
$\cal{C}$ have a permutation symmetry $\sigma$, then the code
$\mathcal{Q}$ has the permutation symmetry $\sigma$ as well.
\end{corollary}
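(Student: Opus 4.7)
My plan is to reduce Corollary~\ref{pro:cyclicgraph} to Theorem~\ref{thm:suff}, so the task becomes verifying that both hypotheses of that theorem follow from the assumption that $\mathcal{G}$ and $\mathcal{C}$ share the permutation symmetry $\sigma$.

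First I would make precise what ``$\mathcal{G}$ has symmetry $\sigma$'' means at the level of stabilizer generators. For a graph state the canonical stabilizer generators are $\bm{g}_i = X_i \prod_{j \in N(i)} Z_j$, indexed by the vertices of $\mathcal{G}$. Because $\sigma$ is a graph automorphism, it maps the neighborhood $N(i)$ bijectively onto $N(\sigma(i))$, and a direct computation then gives $\bm{g}_i^\sigma = \bm{g}_{\sigma(i)}$. In the language of Theorem~\ref{thm:suff}, where $\bm{g}_i^\sigma = \prod_j \bm{g}_j^{R_{ji}}$, this pins the matrix $R$ down to be exactly the permutation matrix associated with $\sigma$, namely $R_{ji} = \delta_{j,\sigma(i)}$. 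In particular, the stabilizer $\mathcal{S}$ itself carries the permutation symmetry $\sigma$.

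Next I would translate the assumed permutation symmetry of $\mathcal{C}$ into the condition $\mathcal{C}R \cong \mathcal{C}$ required by Theorem~\ref{thm:suff}. Since $R$ is now a permutation matrix, right-multiplication by $R$ acts on a codeword $(c_1,\ldots,c_n) \in \mathcal{C}$ simply by permuting its coordinates according to $\sigma$ (up to the inversion convention, which is immaterial because the symmetry group of a code is closed under inverses). The hypothesis that $\mathcal{C}$ is $\sigma$-invariant then says exactly that the set of rows of $\mathcal{C}R$ coincides with the set of rows of $\mathcal{C}$, which is $\mathcal{C}R \cong \mathcal{C}$ in the sense of Theorem~\ref{thm:suff}.

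With both hypotheses in place, Theorem~\ref{thm:suff} immediately yields $P_{\mathcal{Q}}^\sigma = P_{\mathcal{Q}}$, which is the desired conclusion. There is essentially no obstacle in this argument; the only point requiring a touch of care is the $\sigma$ versus $\sigma^{-1}$ bookkeeping when passing between the action on qudits and the action on coordinates of codewords, and this is harmless because ``$\sigma$-invariance'' and ``$\sigma^{-1}$-invariance'' coincide.
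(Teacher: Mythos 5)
Your proposal is correct and follows the same route as the paper: the paper's proof is simply the observation that the corollary is a direct consequence of Theorem~\ref{thm:suff} because for a graph state the matrix $R$ induced by a graph automorphism $\sigma$ is exactly the permutation matrix of $\sigma$, so $\mathcal{C}R\cong\mathcal{C}$ is precisely the $\sigma$-invariance of $\mathcal{C}$. Your write-up just makes explicit the computation $\bm{g}_i^\sigma=\bm{g}_{\sigma(i)}$ that the paper leaves implicit.
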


\begin{proof}
This is actually a direct implication of Theorem~\ref{thm:suff}; in
this case the matrix $R$ is nothing but a permutation matrix
corresponding to the permutation $\sigma$.
\end{proof}

This turns out to be good luck, as due to the structure of the
stabilizer generators of graph states, a permutation of the qubits
corresponds to the same permutation of the generators $\bm{g}_i$, and
hence also corresponds to a permutation of the coordinates in the
classical code $\mathcal{C}$.  Prominent examples are the
$(\!(5,2,3)\!)$ code and the $(\!(5,6,2)\!)$ code, whose corresponding
graph is a pentagon in both cases, and the corresponding classical
codes are cyclic (see \cite[Sec. IIIA,B]{CSS+09}).

Finally, let us examine the graph symmetry for the toric code. The
toric code was first proposed by Kitaev in 1997 as an example
demonstrating topologically ordered quantum systems
\cite{Kit97,KSV:computation}.  The setting is a two-dimensional square
lattice with periodic boundary conditions and with a qubit sitting on
each edge of the lattice. There are two types of stabilizer
generators:
\begin{enumerate}
\item \figstar\ (star) type, indicated in Fig.~\ref{fig1}(b) as $s$: 
\begin{equation}
\label{eq:star}
A_s^{X}=\prod_{j\in\text{star}(s)}X_j
\end{equation}
\item \figbox\ (plaquette) type, indicated in Fig.~\ref{fig1}(b) as $p$:
\begin{equation}
\label{eq:plaquette}
A_p^{Z}=\prod_{j\in\text{plaquette}(p)}Z_j
\end{equation}
\end{enumerate}
It is straightforward to check that $A_s^{X}$ and $A_p^Z$ commute for
any pair $s$ and $p$. 

These stabilizer generators are by definition translational invariant,
for the translation along each direction of the two-dimensional square
lattice.  What is more, one can even
find an encoding graph which is also translational invariant
\cite{BR07}.  We will show that unfortunately one cannot find a
translational invariant graph to represent the toric code as a CWS
code.

\begin{theorem}
\label{thm:toric}
A graph corresponding to the toric code cannot have the same translational symmetry
as the code.
\end{theorem}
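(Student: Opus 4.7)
I would proceed by contradiction. Suppose that a graph $\mathcal{G}$ on the $2L^2$ edge-qubits of the toric code is invariant under the lattice translation group $\mathbb{Z}_L\times\mathbb{Z}_L$, and that $(\mathcal{G},\mathcal{C})$ is locally Clifford equivalent to the toric code for some classical code $\mathcal{C}$. Because both the toric code and $(\mathcal{G},\mathcal{C})$ carry the lattice translation symmetry, the intertwining local Clifford $V$ must commute with every lattice translation modulo Paulis; in the $\F_2$-symplectic picture this means $V$ acts as a single $V_H\in\mathrm{Sp}(2,\F_2)$ on every horizontal-edge qubit and as a single $V_V\in\mathrm{Sp}(2,\F_2)$ on every vertical-edge qubit.

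Next, I would lift to the $\F_2$-symplectic representation, in which the graph-state stabilizer becomes the Lagrangian $\{(u,\Gamma u):u\in\F_2^{2L^2}\}$ with $\Gamma$ partitioned into translation-invariant blocks $\Gamma_{HH},\Gamma_{HV},\Gamma_{VH},\Gamma_{VV}$. Requiring each LC-transformed star $V A_s^X V^\dagger$ and plaquette $V A_p^Z V^\dagger$ to lie in this Lagrangian produces a system of $\F_2$-linear equations whose coefficients are the symplectic entries of $V_H,V_V$; by translation invariance the full system collapses to one system at a base star $s_0$ and one at a base plaquette $p_0$. I would express the collapsed system cleanly inside the group algebra $R=\F_2[\tau_x^{\pm 1},\tau_y^{\pm 1}]/(\tau_x^L-1,\tau_y^L-1)$, where the four indicator vectors $e_{H(s_0)},e_{V(s_0)},e_{H(p_0)},e_{V(p_0)}$ become the elements $1+\tau_x^{-1},1+\tau_y^{-1},1+\tau_y,1+\tau_x$ of $R$.

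The contradiction is then extracted from the ``row/column'' stabilizer elements $\prod_{x}A^X_{(x,y_0)}$ and $\prod_{y}A^Z_{(x_0,y)}$, each of which, after cancellations from double-covered edges, equals a Pauli supported only on a single type of qubit (two parallel rows of vertical edges, or two parallel columns of horizontal edges). Demanding that the LC-image of each of these one-type elements still belong to the graph-state Lagrangian will force all of $a_H,c_H,a_V,c_V$ to equal $1$; otherwise the transformed operator would have vanishing $X$-part and non-vanishing $Z$-part, which the Lagrangian $\{(u,\Gamma u)\}$ never contains. Combined with the symplectic relations $a_Hd_H+b_Hc_H=1$ and $a_Vd_V+b_Vc_V=1$, only a few subcases remain, related by the $X\leftrightarrow Z$ and $H\leftrightarrow V$ dualities of the toric-code setup. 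In each remaining subcase I would combine the four base $R$-equations and apply the annihilator $S_x=1+\tau_x+\cdots+\tau_x^{L-1}$ (which kills $1+\tau_x$) to both sides of the derived identity: the left-hand side vanishes, while the right-hand side reduces to a non-zero multiple of $1+\tau_y$ in $R$, providing the desired contradiction. The main obstacle I anticipate is the uniform treatment of the $36$ choices of $(V_H,V_V)\in\mathrm{Sp}(2,\F_2)^2\cong S_3\times S_3$; after exploiting the two dualities above only a handful of genuinely distinct cases should remain, and one must separately check that no degenerate case is rescued by a clever choice of $\mathcal{C}$, which by Theorem~\ref{thm:unique} affects only the signs of the CWS codewords and not which Pauli elements are shared between the toric code and its graph-state representation.
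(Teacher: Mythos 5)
Your overall architecture is sound and its first stage essentially reproduces the paper's: the paper also starts from an intertwining local Clifford $\widehat{C}$, uses the translation symmetries $\sigma$ to form $D_i=C_i^{\dagger}C_{\sigma(i)}$, and concludes that $\widehat{C}$ is a fixed single-qubit Clifford on each translation orbit times a Pauli. (You are in fact more careful than the paper here: translations have \emph{two} orbits of edges, so one gets two matrices $V_H,V_V$, whereas the paper invokes transitivity and writes a single $H$.) However, your opening claim that ``$V$ must commute with every lattice translation modulo Paulis'' is asserted, not proved; the paper supplies the missing argument, namely that $A_s^X$ is the \emph{unique} element of the toric-code stabilizer supported inside the edge set of the star $s$ (and likewise for plaquettes), so any local Clifford normalizing $\mathcal{T}$ must send $X\mapsto\pm X$ and $Z\mapsto\pm Z$ on every qubit and is therefore Pauli. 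You should include this step; without it the reduction to $(V_H,V_V)\in\mathrm{Sp}(2,\F_2)^2$ does not follow.

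For the second stage your route genuinely diverges from the paper, which dismisses the remaining cases in a single sentence (``$H\otimes H\otimes\cdots$ cannot turn the toric-code stabilizer into a graph-type stabilizer group''); your symplectic/group-algebra formulation is the right way to actually substantiate that sentence, and the key observation — the Lagrangian $\{(u,\Gamma u)\}$ contains no nonidentity element with vanishing $X$-part — is correct and is what the row/column stabilizer elements exploit. But two things need repair. First, the entries forced to equal $1$ are the $X$-components of the images of \emph{both} $X$ and $Z$ (in the convention $X\mapsto X^aZ^c$, $Z\mapsto X^bZ^d$ this is $a_H=a_V=b_H=b_V=1$, not $a,c$), which together with $ad+bc=1$ leaves two choices of $(c,d)\in\{(0,1),(1,0)\}$ per orientation, i.e.\ four surviving pairs. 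Second, and more importantly, the decisive computation — that in each surviving case the four base equations $\Gamma\bigl(e_{H(s_0)},e_{V(s_0)}\bigr)^{t}=\bigl(c_He_{H(s_0)},c_Ve_{V(s_0)}\bigr)^{t}$ and $\Gamma\bigl(e_{H(p_0)},e_{V(p_0)}\bigr)^{t}=\bigl(d_He_{H(p_0)},d_Ve_{V(p_0)}\bigr)^{t}$ are inconsistent with $\Gamma$ symmetric, zero-diagonal and translation invariant — is only sketched. Note that all the ``local'' consistency checks (symmetry of $\Gamma$ tested against pairs of stars/plaquettes, the zero diagonal) are in fact satisfiable, so the contradiction is necessarily global; moreover $R$ is not an integral domain (the annihilator of $1+\tau_x$ is larger than $(S_x)$ when $p\mid L$), so ``apply $S_x$ and read off a nonzero multiple of $1+\tau_y$'' needs to be carried out explicitly rather than asserted. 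Since this is precisely where the content of the theorem lives, the proposal as written is a credible plan but not yet a proof.
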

\begin{proof}
Let $\mathcal T$ be the toric code stabilizer generated by the star
and plaquette operators as given by Eq. (\ref{eq:star}) and
Eq. (\ref{eq:plaquette}).  Suppose that $\mathcal Q=(\mathcal{G},
\mathcal C)$ is a code where $\mathcal G$ is as symmetric as the toric
code stabilizer (i.\,e., translational invariant) and is local
Clifford equivalent to $\mathcal T$. This means that if we let
$\mathcal S$ be the stabilizer of $\mathcal Q$, then there are local
Clifford elements $C_1, C_2, \dots$ such that $\mathcal S
=\widehat{C}\mathcal {T}\widehat C^{\dagger}$, where
$\widehat{C}=C_1\otimes C_2\otimes \cdots$ (here we choose an
arbitrary indexing of qubits).

Let $\sigma$ be a permutation symmetry of the toric code and define
$\widehat{C}_{\sigma}=C_{\sigma(1)}\otimes C_{\sigma(2)}\otimes
\cdots$. Since $\sigma$ is assumed to be a symmetry of $\mathcal S$ as
well, we have
$$
\mathcal S
=\widehat{C}\mathcal {T}\widehat
C^{\dagger}=\widehat{C}_{\sigma}\mathcal {T}\widehat
C^{\dagger}_{\sigma}.
$$
Then for $D_i=C_i^{\dagger}C_{\sigma(i)}$, we have $\widehat{D}\mathcal{T}
\widehat D^{\dagger}=\mathcal{T}$, where $\widehat{D}=D_1\otimes D_2\otimes \cdots$.

Let $XXXX$ be the element of
this stabilizer group $\mathcal T$ corresponding to some star
\figstar. Since $\widehat{D}$ is local, and $XXXX$ is the
only element of $\mathcal T$ that acts on edges
corresponding to \figstar, we must have
$\mathcal{D}XXXX\mathcal{D}^{\dagger}=XXXX$. The same argument applies
to the $Z$-terms corresponding to a plaquette \figbox\,. As a result,
conjugation by $D_i$ maps $X$ to $\pm X$ and $Z$ to $\pm Z$. Hence
$D_i$ is an element of the Pauli group.

Now we know that $\widehat{D}$ is in the Pauli group, and it holds for
every permutation $\sigma$. On the other hand, the symmetry group of
the toric code is transitive. Therefore, for every $i$, $j$, the product
$C_i^{\dagger}C_j$ is in the Pauli group, and furthermore
\[
C_1\otimes C_2\otimes \cdots = 
(H\otimes H\otimes \cdots ) (P_1\otimes P_2\otimes \cdots),
\]
where the factors $P_i$ are in the Pauli group and $H$ is some
Clifford element acting on a single qubit. 

$\widehat C \mathcal T \widehat C^\dagger$ is supposed to correspond to a graph state, but $(P_1\otimes P_2\otimes
\cdots)$ just changes some signs in the stabilizer group, and 
$(H\otimes H\otimes \cdots )$ cannot turn the stabilizer group of the
toric code into a graph-type stabilizer group. 
\end{proof}

\section{Summary and Discussion}

In this work we have investigated the symmetry properties of CWS
codes.  Our main result shows that for a given CWS code $\mathcal{Q}$
with some permutation symmetry $\sigma$, there always exits a
stabilizer state $\mathcal{S}$ with the same symmetry $\sigma$ such
that $\mathcal{Q}=(\mathcal{S},\mathcal{C})$ for some classical code
$\mathcal{C}$. As many good CWS codes are found by starting from a
chosen $\mathcal{S}$, this ensures that when trying to find CWS codes
with certain permutation symmetry, the choice of $\mathcal{S}$ with
the same symmetry will suffice. A key point to reach our main result
is to obtain a canonical representation for CWS codes, i.\,e., a
unique decomposition as USt codes.

One natural question is whether there is any chance to find a
classical code $\mathcal{C}$ with the same symmetry $\sigma$ as that
of $\mathcal{Q}$, which, together with some $\mathcal{S}$ with
symmetry $\sigma$, gives $\mathcal{Q}=(\mathcal{S},\mathcal{C})$. We
do not know the answer in general, but we know that one can no longer
restrict $\mathcal{S}$ to the stabilizer used in the canonical form,
but might have to introduce some phase factors.  We have developed a
sufficient condition that $\mathcal{C}$ has to satisfy in order to
ensure that in combination with some $\mathcal{S}$ with symmetry
$\sigma$, one will have $\mathcal{Q}=(\mathcal{S},\mathcal{C})$ with
the same symmetry $\sigma$.  Observing the fact that the permutation
on the code $\mathcal{Q}$ does not directly translate into a
permutation of the classical $\mathcal{C}$ (but a linear
transformation given by the matrix $R$), in general one cannot expect
to find a classical code $\mathcal{C}$ with the same symmetry as that
of $\mathcal{Q}$.

One interesting case are cyclic codes.  If there exists a graph
$\mathcal{G}$ which has the same symmetry $\sigma$ as the CWS code
$\mathcal{Q}=(\mathcal{G},\mathcal{C})$, then the permutation of the
code $\mathcal{Q}$ translates directly into a permutation of the
classical code $\mathcal{C}$. Hence, combining a graph $\mathcal{G}$
whose symmetry group contains the cyclic group of order $n$, with a
cyclic classical code $\mathcal{C}$ of length $n$, gives a cyclic CWS
code $\mathcal{Q}=(\mathcal{G},\mathcal{C})$.  It would be nice to see
whether the converse is true as well, i.\,e., given a cyclic CWS code
$\mathcal{Q}$ which corresponds to a graph $\mathcal{G}$ whose
symmetry group contains the cyclic group of order $n$, can we always
find a cyclic classical code $\mathcal{C}$ of length $n$, such that
$\mathcal{Q}=(\mathcal{G},\mathcal{C})$. We leave this for future
investigation.

In general, although every CWS code $\mathcal{Q}$ is local Clifford
equivalent to a standard form $(\mathcal{G},\mathcal{C})$, the local
Clifford operation may destroy the permutation symmetry of the
original code.  In other words, one cannot expect to always find a
graph $\mathcal{G}$ which has the same symmetry as that of
$\mathcal{Q}$.  The seven-qubit Steane code is such an example where
the graph can only possess a three-fold cyclic symmetry which is the
symmetry of the stabilizer generators, instead of the seven-fold
cyclic symmetry of the code.  For the toric code, despite the stabilizer
generators being translational invariant, we show that there does not
exist any associated translational invariant graph.  A general
understanding of the conditions that graphs can possess the same
symmetry as the CWS code is worth further investigation.

\subparagraph*{Acknowledgements}

SB was in part supported by National Elites Foundation and by a grant
from IPM (No. 91810409). JC is supported by NSERC and NSF of China
(Grant No. 61179030).  The CQT is funded by the Singapore MoE and the
NRF as part of the Research Centres of Excellence programme.  ZJ
acknowledges support from NSERC, ARO and NSF of China (Grant
Nos. 60736011 and 60721061).  QW is supported by NSERC.  BZ is
supported by NSERC and CIFAR.
MG acknowledges support by the Intelligence Advanced Research Projects
Activity (IARPA) via Department of Interior National Business Center
contract No. D11PC20166.  The U.S. Government is authorized to reproduce
and distribute reprints for Governmental purposes notwithstanding any
copyright annotation thereon. Disclaimer: The views and conclusions
contained herein are those of the authors and should not be interpreted
as necessarily representing the official policies or endorsements,
either expressed or implied, of IARPA, DoI/NBC, or the
U.S. Government.

The authors would like to thank Martin R{\"o}tteler for his suggestion
to use the Fourier transformation to prove Lemma \ref{lm:shift}.

\bibliography{CWSSym}







\end{document}